\newcommand{\set}[1]{\left\{#1\right\}}
\newcommand{\fpr}[1]{\mathopen{}\left(#1\right)}
\newcommand{\abs}[1]{{\left|#1\right|}}
\newcommand{\np}{\textbf{NP}}
\newcommand{\apx}{\textbf{APX}}
\newcommand{\ints}{\mathbb{Z}}
\newcommand{\funcdef}[3]{{#1}:{#2} \to {#3}}
\newcommand{\define}{\leftarrow}
\DeclareRobustCommand{\dispfunc}[2]{%
  \ensuremath{%
  \ifthenelse{\equal{#2}{}}%
    {\mathit{#1}}%
    {\mathit{#1}\fpr{#2}}}}
\newcommand{\agony}[1]{\dispfunc{q}{#1}}
\newcommand{\slack}[1]{\dispfunc{slack}{#1}}
\newcommand{\parent}[1]{\dispfunc{parent}{#1}}
\newcommand{\relief}{\textsc{Relief}\xspace}
\newcommand{\minagony}{\textsc{MinAgony}\xspace}
\newcommand{\cycledfs}{\textsc{CycleDFS}\xspace}
\newcommand{\fm}[1]{\mathcal{#1}}
\newlength{\levelsep}
\newlength{\nodesep}
\tikzstyle{node} = [fill = white, circle, inner sep = 1pt]
\tikzstyle{label} = [inner sep = 0pt]
\tikzstyle{euleredge} = [yafcolor2, -latex, thick, densely dotted]
\tikzstyle{dagedge} = [yafcolor1, -latex, thick]
\tikzstyle{leveledge} = [yafaxiscolor!50, dashed]
\definecolor{yafaxiscolor}{rgb}{0.3, 0.3, 0.3}
\definecolor{yafcolor1}{rgb}{0.4, 0.165, 0.553}
\definecolor{yafcolor2}{rgb}{0.949, 0.482, 0.216}
\definecolor{yafcolor3}{rgb}{0.47, 0.549, 0.306}
\definecolor{yafcolor4}{rgb}{0.925, 0.165, 0.224}
\definecolor{yafcolor5}{rgb}{0.141, 0.345, 0.643}
\definecolor{yafcolor6}{rgb}{0.965, 0.933, 0.267}
\definecolor{yafcolor7}{rgb}{0.627, 0.118, 0.165}
\definecolor{yafcolor8}{rgb}{0.878, 0.475, 0.686}
\newlength{\yafaxispad}
\newlength{\yaftlpad}
\newlength{\yaflabelpad}
\newlength{\yafaxiswidth}
\newlength{\yafticklen}
\def\pgfplots@drawtickgridlines@INSTALLCLIP@onorientedsurf#1{}
\newcommand{\yafdrawxaxis}[2]{
	\pgfplotstransformcoordinatex{#1}\let\xmincoord=\pgfmathresult 
	\pgfplotstransformcoordinatex{#2}\let\xmaxcoord=\pgfmathresult 
	\pgfsetlinewidth{\yafaxiswidth} 
	\pgfsetcolor{yafaxiscolor}
	\pgfpathmoveto{\pgfpointadd{\pgfpointadd{\pgfplotspointrelaxisxy{0}{0}}{\pgfqpointxy{\xmincoord}{0}}}{\pgfqpoint{-0.5\yafaxiswidth}{\yafaxispad}}}
	\pgfpathlineto{\pgfpointadd{\pgfpointadd{\pgfplotspointrelaxisxy{0}{0}}{\pgfqpointxy{\xmaxcoord}{0}}}{\pgfqpoint{0.5\yafaxiswidth}{\yafaxispad}}}
	\pgfusepath{stroke}

}
\newcommand{\yafdrawyaxis}[2]{
	\pgfplotstransformcoordinatey{#1}\let\ymincoord=\pgfmathresult 
	\pgfplotstransformcoordinatey{#2}\let\ymaxcoord=\pgfmathresult 
	\pgfsetlinewidth{\yafaxiswidth} 
	\pgfsetcolor{yafaxiscolor}
	\pgfpathmoveto{\pgfpointadd{\pgfpointadd{\pgfplotspointrelaxisxy{0}{0}}{\pgfqpointxy{0}{\ymincoord}}}{\pgfqpoint{\yafaxispad}{-0.5\yafaxiswidth}}}
	\pgfpathlineto{\pgfpointadd{\pgfpointadd{\pgfplotspointrelaxisxy{0}{0}}{\pgfqpointxy{0}{\ymaxcoord}}}{\pgfqpoint{\yafaxispad}{0.5\yafaxiswidth}}}
	\pgfusepath{stroke}
}
\newcommand{\yafdrawaxis}[4]{\yafdrawxaxis{#1}{#2}\yafdrawyaxis{#3}{#4}}
\pgfplotsset{axis y line=left, axis x line=bottom,
	tick align=outside,
	compat = 1.3,
	tickwidth=\yafticklen,
	clip = false,
	every axis title shift = 0pt,
    x axis line style= {-, line width = 0pt, opacity = 0},
    y axis line style= {-, line width = 0pt, opacity = 0},
    x tick style= {line width = \yafaxiswidth, color=yafaxiscolor, yshift = \yafaxispad},
    y tick style= {line width = \yafaxiswidth, color=yafaxiscolor, xshift = \yafaxispad},
    x tick label style = {font=\scriptsize, yshift = \yaftlpad},
    y tick label style = {font=\scriptsize, xshift = \yaftlpad},
    every axis y label/.style = {at = {(ticklabel cs:0.5)}, rotate=90, anchor=center, font=\scriptsize, yshift = -\yaflabelpad},
    every axis x label/.style = {at = {(ticklabel cs:0.5)}, anchor=center, font=\scriptsize, yshift = \yaflabelpad},
    x tick label style = {font=\scriptsize, yshift = 1pt},
    grid = major,
    major grid style  = {dash pattern = on 1pt off 3 pt},
	every axis plot post/.append style= {line width=\yafaxiswidth} ,
	legend cell align = left,
	legend style = {inner sep = 1pt, cells = {font=\scriptsize}},
	legend image code/.code={%
		\draw[mark repeat=2,mark phase=2,#1] 
		plot coordinates { (0cm,0cm) (0.15cm,0cm) (0.3cm,0cm) };% 
	} 
}
\newcommand{\spara}[1]{{\smallskip\noindent{\bf {#1}}}}
\begin{document}

\title{Faster way to agony}
\subtitle{Discovering hierarchies in directed graphs}

\author{Nikolaj Tatti}
\institute{Helsinki Institute for Information Technology \\  
Department of Information and Computer Science\\
Aalto University, Finland\\
\url{nikolaj.tatti@aalto.fi}}

\maketitle

\begin{abstract}
Many real-world phenomena exhibit strong hierarchical structure.  Consequently,
in many real-world directed social networks vertices do not play equal role.
Instead, vertices form a hierarchy such that the edges appear mainly from upper
levels to lower levels. Discovering hierarchies from such graphs
is a challenging problem that has gained attention. Formally, given a directed graph, we
want to partition vertices into levels such that ideally there are only edges
from upper levels to lower levels.  From computational point of view, the ideal
case is when the underlying directed graph is acyclic. In such case, we can
partition the vertices into a hierarchy such that there are only edges from
upper levels to lower edges.  In practice, graphs are rarely acyclic, hence we
need to penalize the edges that violate the hierarchy. One practical approach
is agony, where each violating edge is penalized based on the severity of the
violation.  The fastest algorithm for computing agony requires $O(nm^2)$ time.
In the paper we present an algorithm for computing agony that has better
theoretical bound, namely $O(m^2)$.  We also show that in practice the obtained
bound is pessimistic and that we can use our algorithm to compute agony for
large datasets. Moreover, our algorithm can be used as any-time algorithm.

\keywords{Graph mining, agony, hierarchy discovery, primal-dual, maximum eulerian subgraph}
\end{abstract}

\section{Introduction}
\label{sec:intro}

Many real-world phenomena exhibit strong hierarchical
structure~\cite{DBLP:conf/icdm/MacchiaBGC13,clauset:08:hierarchy,DBLP:conf/cse/MaiyaB09,jameson:99:behaviour,elo1978rating}.
For example, it is
more likely that a manager in a large company will write emails to the her
subordinates than an employee writes an email to his manager. As another
example, in a tournament, it is more likely that a better team will
win a second-tear team. 

Discovering hierarchy in the context of directed networks can be viewed as the
following optimization problem. Given a directed graph, partition vertices into
levels such that there are only edges from upper levels to lower levels.  For
example, consider an email communication network of a large institute,
directed edge $x \to y$ is created if $x$ has written an email to $y$. We
should expect that the upper level of the hierarchy consists of top-level
managers and each level consists of subordinates of the previous level.

Unfortunately, such a partition is only possible when the graph does not have
cycles, a rare case in practice. Instead a more fruitful approach is to find a
hierarchy that minimizes some cost function. One possible cost function is to
penalize every edge that violates the hierarchy with a constant cost.
Unfortunately, this problem leads to \textsc{Feedback Arc Set} problem, where
we are asked to discover a maximal directed acyclic subgraph. This problem is a
classic \np-hard problem~\cite{dinur:05:cover}.

A practical variant of discovering hierarchies that was introduced recently by~\citet{gupte:11:agony} is
to weight the edges based on the severity of the violation of hierarchy.
Unlike the constant weights, this problem can be solved in $O(nm^2)$,
polynomial time, where $n$ is the number of vertices and $m$ is the number of
edges.

In this paper we introduce a new algorithm for computing a hierarchy that
minimizes agony. Our algorithm achieves computational complexity of $O(m^2)$
which is significantly better than $O(nm^2)$, the computational complexity of
the currently best approach. We also demonstrate empirically that $O(m^2)$ is
in fact pessimistic and that we can compute agony using our approach for large
networks.

Our approach is based on a primal-dual technique. Minimizing agony has an
interpretable dual problem, finding eulerian subgraph, a graph where the
in-degree is equal to the out-degree for each vertex, with the maximum number of
edges. This relation implies that the agony will always be at least as large as
any eulerian subgraph.  We are able to exploit this relation by designing an
iterative algorithm.  At each iteration we decrease the gap between the current
agony and the current eulerian subgraph by either modifying the hierarchy or
modifying the eulerian subgraph. We show that each iteration requires only
$O(m)$ time and we need at most $m$ steps.

The rest of the paper is organized as follows.  We introduce the notation and
state the optimization problem in Section~\ref{sec:prel}.  In
Section~\ref{sec:pd} we review the connection between agony and eulerian
subgraphs. In Section~\ref{sec:discovery} we introduce our optimization
algorithm. We discuss the related work in Section~\ref{sec:related} and present
experimental evaluation in Section~\ref{sec:exps}. Finally, we conclude
the paper with remarks in Section~\ref{sec:concl}.

\section{Preliminaries and problem statement}
\label{sec:prel}

Throughout the whole paper we assume that we are given a directed graph $G =
(V, E)$.  We will denote the number of vertices by $n = \abs{V}$ and the number
of edges by $m = \abs{E}$. \emph{All} graphs in this paper are directed and
have the same vertices $V$. 
Given a graph $H = (V, F)$, we will write $E(H) = F$.

In this paper our goal is to discover a hierarchy among vertices in a graph $G$.
That is, assume that we are given a graph $G = (V, E)$ and our goal is to
discover a partition of vertices $\fm{P} = P_1, \ldots, P_k$, such that $P_i
\cap P_j = \emptyset$ and $\bigcup_{i = 1}^k P_i = V$, optimizing a certain
quality score which we will define later.  It will be more convenient to
express this partition using a rank function, that is, our goal is to construct
a function $\funcdef{r}{V}{\ints}$ mapping each vertex to an integer. We can
easily construct a partition from this rank function by grouping the nodes
mapping to the same value together.

Our next step is to define the quality score.

Given a rank function $r$,
we say that an edge $e = (u, v) \in E$ is \emph{forward} if $r(u) < r(v)$.
Similarly, we say that $e = (u, v) \in E$ is \emph{backward}
if $r(u) \geq r(v)$. Note that the inequality is strict for forward edges.

As our goal is to discover hierarchy in $G$, in an ideal partition all edges
are forward.  This is only possible if $G$ is a DAG which is rarely the case in practice. 
Consequently, we need a quality score that would penalize the backward edges. 
Given a rank $r$ we define the \emph{agony} of an edge $(u, v) \in E$
to be
\[
	\agony{(u, v), r} = \max(r(u) - r(v) + 1, 0)\quad.
\]
The agony for forward edges is $0$ while the agony for backward edges is the
difference between ranks plus 1.  Note that the edges within the same block are
penalized by $1$.

Given a graph $G$ and a rank $r$ we define the agony of the whole graph to be the sum of individual edges,
\[
	\agony{G, r} = \sum_{e \in E} \agony{e, r}\quad.
\]

\begin{figure}[ht!]
\hfill
\begin{tikzpicture}

\draw[leveledge] (-0.5\nodesep, 0) -- (4.5\nodesep, 0);
\draw[leveledge] (-0.5\nodesep, \levelsep) -- (4.5\nodesep, \levelsep);
\draw[leveledge] (-0.5\nodesep, 2\levelsep) -- (4.5\nodesep, 2\levelsep);
\draw[leveledge] (-0.5\nodesep, 3\levelsep) -- (4.5\nodesep, 3\levelsep);

\node[node] (u0) at (\nodesep, 0) {$a$};
\node[node] (u1) at (0, \levelsep) {$b$};
\node[node] (u2) at (\nodesep, 2\levelsep) {$c$};
\node[node] (u3) at (0, 3\levelsep) {$d$};

\node[node] (u4) at (2\nodesep, 1\levelsep) {$e$};
\node[node] (u5) at (3\nodesep, 0\levelsep) {$f$};
\node[node] (u6) at (4\nodesep, 1\levelsep) {$g$};
\node[node] (u7) at (3\nodesep, 2\levelsep) {$h$};

\draw[euleredge] (u0) edge (u2);
\draw[euleredge] (u2) edge (u3);
\draw[euleredge] (u3) edge (u1);
\draw[euleredge] (u1) edge (u0);

\draw[euleredge] (u5) edge (u4);
\draw[euleredge] (u4) edge (u6);
\draw[euleredge] (u6) edge (u5);
\draw[dagedge] (u6) edge (u7);
\draw[dagedge] (u4) edge (u7);

\draw[dagedge] (u0) edge (u4);
\end{tikzpicture}\hfill
\begin{tikzpicture}

%\draw[leveledge] (-0.5\nodesep, 0) -- (4.5\nodesep, 0);
\draw[leveledge] (-0.5\nodesep, \levelsep) -- (4.5\nodesep, \levelsep);
\draw[leveledge] (-0.5\nodesep, 2\levelsep) -- (4.5\nodesep, 2\levelsep);
\draw[leveledge] (-0.5\nodesep, 3\levelsep) -- (4.5\nodesep, 3\levelsep);

\node[node] (u0) at (\nodesep, \levelsep) {$a$};
\node[node] (u1) at (0, \levelsep) {$b$};
\node[node] (u2) at (\nodesep, 3\levelsep) {$c$};
\node[node] (u3) at (0, 3\levelsep) {$d$};

\node[node] (u4) at (2\nodesep, 2\levelsep) {$e$};
\node[node] (u5) at (2.5\nodesep, 1\levelsep) {$f$};
\node[node] (u6) at (4\nodesep, 1\levelsep) {$g$};
\node[node] (u7) at (3\nodesep, 3\levelsep) {$h$};

\draw[dagedge] (u0) edge (u2);
\draw[euleredge] (u2) edge (u3);
\draw[euleredge] (u3) edge (u1);
\draw[euleredge] (u1) edge (u0);

\draw[euleredge] (u5) edge (u4);
\draw[euleredge] (u4) edge (u6);
\draw[euleredge] (u6) edge (u5);
\draw[dagedge] (u6) edge (u7);
\draw[dagedge] (u4) edge (u7);

\draw[euleredge] (u0) edge (u4);
\draw[euleredge] (u4) edge (u2);
\end{tikzpicture}\hspace*{\fill}

\caption{Toy graphs. Dotted edges represent the eulerian subgraph. Ranks are represented by dashed grey horizontal
lines.}
\label{fig:ex}
\end{figure}
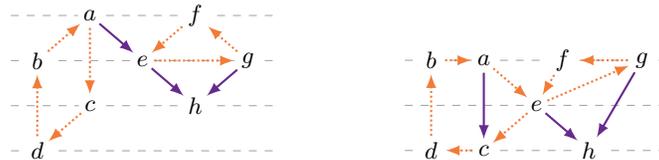

\begin{example}
The agony of the left graph given in Figure~\ref{fig:ex} is equal to
\[
	\agony{(b, a)} + \agony{(d, b)} + \agony{(e, g)} + \agony{(g, f)} = 2 + 3 + 1 + 2 = 8\quad.
\]
The agony of the right graph is equal to
\[
	\agony{(b, a)} + \agony{(d, b)} + \agony{(c, d)} + \agony{(e, g)} + \agony{(g, f)} = 1 + 3 + 1 + 2 + 1 = 8\quad.
\]
\end{example}

We can now state the main optimization problem of this paper.

\begin{problem}
Given a graph $G$ find a rank function $r$ minimizing agony $\agony{G, r}$.
\end{problem}

Graph $H = (V, F)$ is called \emph{eulerian} if the out-degree of each vertex is equal to its in-degree,
\[
	\abs{\set{u \in V ; (v, u) \in F}} = \abs{\set{w \in V ; (v, w) \in F}} \quad.
\]
In the literature, $H$ is sometimes required to be connected but here we do not impose this constraint.

\begin{example}
An example of eulerian subgraph in the left graph of Figure~\ref{fig:ex} consists of
$(b, a)$, 
$(a, c)$, 
$(c, d)$, 
$(d, b)$, 
$(f, e)$, 
$(e, g)$, and 
$(g, f)$. 

An example of eulerian subgraph in the right graph of Figure~\ref{fig:ex} consists of
$(b, a)$, 
$(a, e)$, 
$(e, c)$, 
$(c, d)$, 
$(d, b)$, 
$(f, e)$, 
$(e, g)$, and 
$(g, f)$. 
\end{example}

Given a graph $G$ we say that $H$ is a \emph{maximum} eulerian subgraph $G$ if
$H$ is an eulerian subgraph of $G$ and has the highest number of edges among
all eulerian subgraphs of $G$.  This graph is not necessarily unique.
For notational simplicity, we require that $G$ and $H$
have the same vertices, $V(H) = V(G) = V$. This restriction does not impose any
difficulties since we can always add missing vertices as singletons to $H$.

Given a graph $G$ we say that $H$ is a \emph{maximal} eulerian subgraph $G$ if
$H$ is an eulerian subgraph of $G$ and we cannot increase $H$ by adding new
edges without making it non-eulerian. Note that maximum eulerian subgraph is
necessarily maximal but not the other way around. It is easy to see that $H$
is maximal if and only if the remaining edges in $G$ form a DAG.

As we see in the next section, the following optimization problem, that is, finding
the maximum eulerian subgraph is closely related to optimizing agony.

\begin{problem}
Given a graph $G$ find an eulerian subgraph $H$ maximizing $\abs{E(H)}$, the number of edges.
\end{problem}

\section{Agony and eulerian subgraphs}\label{sec:pd}

In this section we review the connection between agony and discovering maximum
eulerian subgraph. In fact, they are dual problems. This connection allows us
to develop our algorithm in the next sections.

To see the connection let us first write the agony optimization problem as an
integer linear program, that is, our goal is to solve the following program.
\begin{align}
	\min \sum_{(u, v) \in E} p(u, v) &&& \text{such that} \label{eq:primal}\\
	p(u, v) &\geq r(v) - r(u) + 1 && \text{for all } (u, v) \in E,  \nonumber\\
	p(u, v) &\geq 0               && \text{for all } (u, v) \in E, \nonumber\\
	p(u, v),\ r(w) & \in \ints           && \text{for all } (u, v) \in E,\ w \in V\nonumber\quad. 
\end{align}
The solution for Eq.~\ref{eq:primal} will contain the optimal rank function $r$ and 
agony for individual edges $p(u, v)$.

Let us relax the program by dropping the integrality conditions, thus
transforming the program into a standard linear program. The dual of this program is
equal to
\begin{align}
	\label{eq:dual}
	\max \sum_{(u, v) \in E} c(u, v) &&& \text{such that} \\
	c(u, v) &\leq 1               && \text{for all } (u, v) \in E,  \nonumber\\
	\sum_{(u, v) \in E} c(u, v) & =  \sum_{(v, w) \in E} c(v, w)                && \text{for all } v \in V,\nonumber\\
	c(u, v) &\geq 0               && \text{for all } (u, v) \in E\quad.  \nonumber\\
\end{align}
Assume that we are given a feasible solution to a dual problem such that $c(u,
v)$ are integral.  The conditions imply that $c(u, v)$ is either $0$ or $1$.
If we form a subgraph $H$ by taking the edges for which $c(u, v) = 1$, then the
equality condition implies immediately that $H$ is eulerian. Consequently,
the solution for the dual problem is at least as large as the number of edges
in the maximum eulerian graph.

Since the primal solution is always larger than the dual solution we have the following
proposition.

\begin{proposition}
\label{prop:primaldual}
Assume that we are given a graph $G$.  Let $r$ be a rank function 
and let $H$ be an eulerian subgraph. Then $\abs{E(H)} \leq \agony{G, r}$.
Moreover, if $\abs{E(H)} = \agony{G, r}$, then $r$ minimizes agony and $H$
has the maximum number of edges.
\end{proposition}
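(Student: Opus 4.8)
The plan is to prove this as a direct consequence of weak LP duality applied to the linear programs in Eq.~\ref{eq:primal} and Eq.~\ref{eq:dual}. First I would set up the correspondence between combinatorial objects and LP solutions: given a rank function $r$, define $p(u,v) = \agony{(u,v), r} = \max(r(v) - r(u) + 1, 0)$ for each edge; this is by construction a feasible integral solution to the primal program, and its objective value is exactly $\agony{G, r}$. Conversely, given an eulerian subgraph $H$, define $c(u,v) = 1$ if $(u,v) \in E(H)$ and $c(u,v) = 0$ otherwise; the eulerian condition (in-degree equals out-degree at every vertex) is precisely the flow-conservation constraint in Eq.~\ref{eq:dual}, the $0/1$ values satisfy the box constraints $0 \le c(u,v) \le 1$, so $c$ is a feasible solution of the dual, with objective value $\abs{E(H)}$.

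Next I would invoke weak duality. Since the primal in Eq.~\ref{eq:primal} without the integrality constraints is a minimization LP and Eq.~\ref{eq:dual} is its dual maximization LP, every feasible dual value is at most every feasible primal value. The integral $p$ constructed from $r$ is in particular feasible for the relaxed (non-integral) primal, and the integral $c$ constructed from $H$ is feasible for the dual, so $\abs{E(H)} = \sum_{(u,v)\in E} c(u,v) \le \sum_{(u,v) \in E} p(u,v) = \agony{G, r}$, which is the first claim. For the second claim, suppose $\abs{E(H)} = \agony{G, r}$. Then for an arbitrary rank function $r'$, letting $H$ play the role of a fixed dual-feasible point, we get $\agony{G, r} = \abs{E(H)} \le \agony{G, r'}$ by the inequality just proven, so $r$ is optimal; symmetrically, for an arbitrary eulerian subgraph $H'$ we have $\abs{E(H')} \le \agony{G, r} = \abs{E(H)}$, so $H$ has the maximum number of edges.

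The only point requiring a little care — and the closest thing to an obstacle — is justifying that weak duality applies even though the primal we care about has integrality constraints: the trick is that we only ever need the inequality in the direction ``dual feasible $\le$ relaxed-primal feasible,'' and both of our concrete points ($p$ from $r$, $c$ from $H$) are genuinely feasible for the relaxed primal and the dual respectively, so no appeal to strong duality or to integrality of the LP optima is needed. I would also briefly remark that $p(u,v) = \max(r(v)-r(u)+1, 0)$ is the componentwise-smallest choice consistent with the primal constraints for a fixed $r$, which is why it equals the agony; any other feasible $p$ for that $r$ only has larger objective, so restricting to this choice loses nothing. Everything else is routine substitution into the constraint definitions.
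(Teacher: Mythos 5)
Your proof is correct, and it takes a somewhat more elementary route than the paper's. The paper proves the proposition by letting $P$ and $D$ denote the optimal values of the relaxed primal (Eq.~\ref{eq:primal}) and the dual (Eq.~\ref{eq:dual}) and invoking the LP duality theorem $P = D$, then sandwiching: $\agony{G,r} \geq P = D \geq \abs{E(H)}$, with equality throughout when $\abs{E(H)} = \agony{G,r}$. You instead never touch the LP optima: you exhibit the two explicit feasible points ($p$ built from $r$, the indicator $c$ built from $H$), apply only \emph{weak} duality between a pair of feasible solutions, and then obtain the optimality claims by re-applying that same inequality with an arbitrary competing rank function $r'$ or eulerian subgraph $H'$. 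What this buys is self-containment: no appeal to strong duality is needed, and the feasibility checks (which the paper leaves implicit, having sketched the dual-to-eulerian correspondence only in the surrounding text) are spelled out. What the paper's version buys is brevity, since citing $P = D$ lets both optimality statements drop out of a single equality chain. One cosmetic caveat: you set $p(u,v) = \max(r(v) - r(u) + 1, 0)$, which matches the constraint in Eq.~\ref{eq:primal} but has the opposite orientation to the definition of edge agony in Section~\ref{sec:prel} ($\max(r(u) - r(v) + 1, 0)$); this sign flip is an inconsistency already present in the paper itself (the proof of Proposition~\ref{prop:slack} uses your convention), so it is not a gap in your argument, but you should state once which orientation convention you adopt so that $\sum_{(u,v)} p(u,v) = \agony{G,r}$ holds literally.
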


\begin{proof}
Let $P$ be the solution of Eq.~\ref{eq:primal} and let $D$ be the solution of
Eq.~\ref{eq:dual}. Primal-dual theory (see, for example, \cite{Papadimitriou:1982:COA:31027}) states that $D = P$.
We now have $\agony{G, r} \geq P = D \geq \abs{E(H)}$.
If $\abs{E(H)} = \agony{G, r}$, then this immediately implies  $\abs{E(H)} = \agony{G, r} = P = D$,
proving the optimality of $r$ and $H$.\qed
\end{proof}

The previous result only proves that \emph{if} there is a rank function $r$
whose agony corresponds to the number of edges in the eulerian subgraph $H$,
then $r$ and $H$ are optimal. It does not guarantee that such solution exists.
\citet{gupte:11:agony} showed that such solution always exists. However, we do not need this
result. Instead, in the next section we introduce an algorithm that finds $r$
and $H$ satisfying the conditions of Proposition~\ref{prop:primaldual} which
immediately implies the optimality of $r$.

\section{Algorithm for discovering agony}\label{sec:discovery}

In this section we present our algorithm based on the results of previous
section. As our first step, we characterize the difference between the agony of
the current rank function and the number of edges in the eulerian subgraph.  We
then present an algorithm that minimizes this difference and by doing so leads
to the optimal solution. Finally, we present a fast algorithm for discovering a
maximal eulerian subgraph, an initialization step that is needed for our main
algorithm.

\subsection{Gap between agony and eulerian subgraphs}

In order to characterize the gap between the scores we need several concepts.

Assume that we are given a graph $G$ and let $H$ be a maximal eulerian subgraph
$G$.  We say that a rank function $r$ \emph{conforms} $H$ if all backward edges with
respect to $r$ are in $H$. Note that this is possible only if $H$ is maximal,
otherwise there will be at least one backward edge in $E(G) \setminus E(H)$. 

We will express the gap as a sum of slacks. More formally, given a rank $r$ we
define the \emph{slack} of an edge as
\[
	\slack{(u, v), r} = \max(r(v) - r(u) - 1, 0) \quad.
\]
Slack of $(u, v)$ will be positive only if the edge is forward and the rank $r(v)$
is at least $r(u) + 2$.

We saw in the previous section that the agony is always larger than the number
of edges in an eulerian graph. We can express this difference under certain
conditions using slacks.

\begin{proposition}
\label{prop:slack}
Assume that we are given a graph $G = (V, E)$ and let $H = (V, F)$
be a \emph{maximal} eulerian subgraph. Let $r$ be a rank function of $V$
conforming $H$.  Then
\[
	\agony{G, r} = \abs{F} + \sum_{e \in F}  \slack{e, r}\quad.
\]
Moreover, if the sum of slacks is $0$, then $r$ has the lowest possible agony.
\end{proposition}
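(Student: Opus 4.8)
The plan is to split $E$ into $F$ and $E \setminus F$ and treat the two parts separately. Since $r$ conforms $H$, every backward edge lies in $F$, so every edge $(u,v) \in E \setminus F$ is forward, i.e.\ $r(u) < r(v)$; hence $r(u) - r(v) + 1 \le 0$ and $\agony{(u,v), r} = 0$. Consequently $\agony{G, r} = \sum_{e \in F} \agony{e, r}$, and it remains to show $\sum_{e \in F} \agony{e, r} = \abs{F} + \sum_{e \in F} \slack{e, r}$.

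Next I would establish the pointwise identity $\agony{(u,v), r} - \slack{(u,v), r} - 1 = r(u) - r(v)$ for \emph{every} edge $(u,v)$, by a short case analysis on the sign of $d = r(v) - r(u)$: if $d \ge 1$ the left side is $0 - (d-1) - 1 = -d$; if $d = 0$ it is $1 - 0 - 1 = 0 = -d$; and if $d \le -1$ it is $(1-d) - 0 - 1 = -d$. Summing this identity over $e \in F$ gives $\sum_{e \in F}\agony{e,r} - \sum_{e \in F}\slack{e,r} - \abs{F} = \sum_{(u,v)\in F} \bigl(r(u) - r(v)\bigr)$.

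The key step is to observe that the right-hand side vanishes because $H = (V,F)$ is eulerian: in the sum $\sum_{(u,v) \in F}(r(u) - r(v))$ each vertex $w$ contributes $+r(w)$ once per outgoing edge of $F$ and $-r(w)$ once per incoming edge of $F$, and these two counts are equal by the eulerian property, so the whole sum telescopes to $0$. This yields $\sum_{e\in F}\agony{e,r} = \abs{F} + \sum_{e\in F}\slack{e,r}$, which together with the first paragraph proves the identity. I expect this cancellation to be the only real content of the argument — the subtlety being that the claimed identity is false edge-by-edge and holds only after summing, which is precisely where maximality (via conformance) and the eulerian condition enter.

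For the ``moreover'' clause, suppose $\sum_{e \in F} \slack{e, r} = 0$. Then $\agony{G, r} = \abs{F} = \abs{E(H)}$, so the equality case of Proposition~\ref{prop:primaldual} applies directly and shows that $r$ has the minimum possible agony (and, incidentally, that $H$ is a maximum eulerian subgraph).
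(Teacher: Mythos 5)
Your proof is correct, and it differs from the paper's in one genuine respect: the mechanism used to kill the rank differences. The paper first invokes the fact that an eulerian graph decomposes into edge-disjoint cycles $C_1,\ldots,C_s$, applies the per-edge identity $\agony{e,r} = (\text{rank difference}) + 1 + \slack{e,r}$, and telescopes the rank differences along each cycle separately. You instead sum the same per-edge identity over all of $F$ at once and observe that $\sum_{(u,v)\in F}\bigl(r(u)-r(v)\bigr) = \sum_{w\in V} r(w)\bigl(\mathrm{outdeg}_F(w)-\mathrm{indeg}_F(w)\bigr) = 0$ directly from the degree-balance definition of eulerian. This buys you a slightly more self-contained argument: you never need the cycle-decomposition fact, which the paper uses without proof, and the handling of the edges in $E\setminus F$ (agony $0$ by conformance) and the ``moreover'' clause via Proposition~\ref{prop:primaldual} are identical in both proofs. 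A small bonus of your write-up is that your case analysis is carried out with the definitions of $\agony{\cdot}$ and $\slack{\cdot}$ exactly as stated (in terms of $d = r(v)-r(u)$), whereas the paper's displayed computation silently swaps the roles of $u$ and $v$ relative to those definitions — harmless there, since the swap is applied consistently and the cycle sums still vanish, but your version avoids the mismatch altogether.
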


\begin{proof}
Since $H$ is an eulerian graph, we can partition $H$ into $s$ edge-disjoint
cycles $C_1, \ldots, C_s$.
Since backward edges are only in $H$ we can write agony as
\[
	\agony{G, r} = \sum_{e \in F} \agony{e, r} = \sum_{i = 1}^s \sum_{e \in C_i} \agony{e, r}\quad.
\]
The agony of a single edge $e = (u, v)$ can be written as
\[
\begin{split}
	\agony{e, r} & = \max(r(v) - r(u) + 1, 0) = r(v) - r(u) + 1 - \min(r(v) - r(u) + 1, 0) \\
	& = r(v) - r(u) + 1 + \max(r(u) - r(v) - 1, 0) \\
	& = r(v) - r(u) + 1 + \slack{e, r}\quad.
\end{split}
\]
Summing the edges in a single cycle gives us
\[
	\sum_{e \in C_i} \agony{e, r} = \sum_{e = (u, v) \in C_i} r(v) - r(u) + 1 + \slack{e, r} = \abs{C_i} + \sum_{e \in C_i}\slack{e, r}\quad.
\]
Since the cycles are edge-disjoint, we get the first result of the proposition.
If the sum of slacks is $0$, then the the agony $\agony{G, r}$ is equal to the number
of edges in eulerian subgraph. Proposition~\ref{prop:primaldual} now implies that $r$ is optimal
and $H$ is in fact a maximum eulerian subgraph.
\qed
\end{proof}

\begin{example}
Consider the left graph in Figure~\ref{fig:ex}. The current agony is equal to $8$ and
the size of the current eulerian subgraph is equal to $7$. There is one slack edge, namely $\slack{(a, c), r} = 1$.
On the other hand, the right graph in Figure~\ref{fig:ex} has agony of $8$ which is equivalent
to the number of edges in the eulerian subgraph. There are no slack edges.
\end{example}

\subsection{Algorithm for computing agony}

We are ready to describe the algorithm. Assume that we are given a graph $G$
and assume that we have obtained a maximal eulerian subgraph and a rank $r$
that conforms $H$. We will describe later how to obtain the initial $H$ and
$r$.

Proposition~\ref{prop:slack} states that $r$ is optimal if there are no edges
with slack in $H$. Assume there is one, say $(p, s)$. We begin the algorithm by
increasing the rank of $p$ so that $(p, s)$ has no slack. This may result that
some of the edges outside $H$ become backward, hence we will increase the rank
of the end point of each new backward edge to make sure that there are no new
backward edges. In addition, some of edges $H$ may obtain more slack, hence we
will also increase those vertices. These increases may require additional
increases for other vertices and we keep doing this until either there are no
more increases needed. If we do not encounter $s$ during this algorithm, then
we have successfully reduced agony by the $\slack{(p, s), r}$.  Otherwise, we
will show that we can modify $H$ such that the number of edges in increased.

The visiting order of vertices is important in order to guarantee that the
algorithm runs in $O(m)$ time. We will show that we can guarantee the running
time if we keep the vertices in a priority queue based on how much we need to
increase their rank, larger increases first.

The pseudo-code for the algorithm is given in Algorithm~\ref{alg:relief}.  The
algorithm takes as an input the underlying graph $G$, current maximal eulerian
subgraph $H$ and conforming $r$, and an edge $(p, s) \in E(H)$ with positive
slack.  The algorithm outputs a new subgraph $H'$ and a new rank function $r'$.

\begin{algorithm}[ht!]
\caption{\relief, given an maximal eulerian subgraph $H$ and a conforming rank
$r$, computes a new subgraph $H'$ and a new rank function $r'$ such that the agony or $r$ is closer to the number of edges in the subgraph.}
\label{alg:relief}
\Input{underlying graph $G$, current maximal eulerian subgraph $H$, current rank function $r$, $(p, s) \in E(H)$ an edge with positive slack}
\Output{updated maximal eulerian subgraph and new rank function}

$F \define E(H)$\;
$r' \define r$\;
$t(v) \define 0$ for all $v \in V$ \tcpas{how much we need to increase $v$} 
$t(p) \define r(s) - r(p) - 1$\;
add $p$ to $S$ with priority $t(p)$\;

\While {$S$ is not empty} {
	$u \define $ pop first element from $S$\;

	$r'(u) \define r'(u) + t(u)$\;
	\ForEach{$(u, v) \in E \setminus F$} {
		\If {$r'(v) \leq r'(u)$} {
			$t \define r'(u) + 1 - r'(v)$\;
			\If {$t > t(v)$} {
				$t(v) \define t$\;
				add $v$ to $S$ with priority $t$, update $v$ if $v \in S$ already\;
				$\parent{v} \define u$\;
			}
		}
	}
	\ForEach{$e = (w, u) \in F$} {
		\If {$\slack{e, r'} > \slack{e, r}$} {
			$t \define \slack{e, r'} - \slack{e, r}$\;
			\If {$t > t(w)$} {
				$t(w) \define t$\;
				add $w$ to $S$ with priority $t$, update $w$, if $w \in S$ already\;
				$\parent{w} \define u$\;
			}
		}
	}

	\If {$\slack{(p, s), r'} > 0$} {
		$O \define $ edges in $E$ along the path from $s$ to $p$ using $\parent{}$\; 
		$F \define (F \setminus O) \cup (O \setminus F)$\;
		delete $(p, s)$ from $F$\;
	}

	\Return $(V, F),\  r'$\;
}

\end{algorithm}

\begin{figure}[ht!]

Case 1: we can increase $r(a)$ without increasing $r(c)$\\
\subcaptionbox{input graph\label{fig:toy:a}}{
\begin{tikzpicture}

\draw[leveledge] (-0.5\nodesep, 0) -- (4.5\nodesep, 0);
\draw[leveledge] (-0.5\nodesep, \levelsep) -- (4.5\nodesep, \levelsep);
\draw[leveledge] (-0.5\nodesep, 2\levelsep) -- (4.5\nodesep, 2\levelsep);
\draw[leveledge] (-0.5\nodesep, 3\levelsep) -- (4.5\nodesep, 3\levelsep);

\node[node] (u0) at (\nodesep, 0) {$a$};
\node[node] (u1) at (0, \levelsep) {$b$};
\node[node] (u2) at (\nodesep, 2\levelsep) {$c$};
\node[node] (u3) at (0, 3\levelsep) {$d$};

\node[node] (u4) at (2\nodesep, 1\levelsep) {$e$};
\node[node] (u5) at (3\nodesep, 0\levelsep) {$f$};
\node[node] (u6) at (4\nodesep, 1\levelsep) {$g$};
\node[node] (u7) at (3\nodesep, 2\levelsep) {$h$};

\draw[euleredge] (u0) edge (u2);
\draw[euleredge] (u2) edge (u3);
\draw[euleredge] (u3) edge (u1);
\draw[euleredge] (u1) edge (u0);

\draw[euleredge] (u5) edge (u4);
\draw[euleredge] (u4) edge (u6);
\draw[euleredge] (u6) edge (u5);
\draw[dagedge] (u6) edge (u7);
\draw[dagedge] (u4) edge (u7);

\draw[dagedge] (u0) edge (u4);
\end{tikzpicture}}\hfill
\subcaptionbox{final graph\label{fig:toy:b}}{
\begin{tikzpicture}

\draw[leveledge] (-0.5\nodesep, 0) -- (4.5\nodesep, 0);
\draw[leveledge] (-0.5\nodesep, \levelsep) -- (4.5\nodesep, \levelsep);
\draw[leveledge] (-0.5\nodesep, 2\levelsep) -- (4.5\nodesep, 2\levelsep);
\draw[leveledge] (-0.5\nodesep, 3\levelsep) -- (4.5\nodesep, 3\levelsep);

\node[node] (u0) at (\nodesep, \levelsep) {$a$};
\node[node] (u1) at (0, \levelsep) {$b$};
\node[node] (u2) at (\nodesep, 2\levelsep) {$c$};
\node[node] (u3) at (0, 3\levelsep) {$d$};

\node[node] (u4) at (2\nodesep, 2\levelsep) {$e$};
\node[node] (u5) at (2.5\nodesep, 1\levelsep) {$f$};
\node[node] (u6) at (4\nodesep, 1\levelsep) {$g$};
\node[node] (u7) at (3\nodesep, 3\levelsep) {$h$};

\draw[euleredge] (u0) edge (u2);
\draw[euleredge] (u2) edge (u3);
\draw[euleredge] (u3) edge (u1);
\draw[euleredge] (u1) edge (u0);

\draw[euleredge] (u5) edge (u4);
\draw[euleredge] (u4) edge (u6);
\draw[euleredge] (u6) edge (u5);
\draw[dagedge] (u6) edge (u7);
\draw[dagedge] (u4) edge (u7);

\draw[dagedge] (u0) edge (u4);
\end{tikzpicture}}\hfill\hspace*{5\nodesep}\\[2mm]

Case 2: we cannot increase $r(a)$ without increasing $r(c)$

\subcaptionbox{input graph\label{fig:toy:c}}{
\begin{tikzpicture}

\draw[leveledge] (-0.5\nodesep, 0) -- (4.5\nodesep, 0);
\draw[leveledge] (-0.5\nodesep, \levelsep) -- (4.5\nodesep, \levelsep);
\draw[leveledge] (-0.5\nodesep, 2\levelsep) -- (4.5\nodesep, 2\levelsep);
\draw[leveledge] (-0.5\nodesep, 3\levelsep) -- (4.5\nodesep, 3\levelsep);

\node[node] (u0) at (\nodesep, 0) {$a$};
\node[node] (u1) at (0, \levelsep) {$b$};
\node[node] (u2) at (\nodesep, 2\levelsep) {$c$};
\node[node] (u3) at (0, 3\levelsep) {$d$};

\node[node] (u4) at (2\nodesep, 1\levelsep) {$e$};
\node[node] (u5) at (3\nodesep, 0\levelsep) {$f$};
\node[node] (u6) at (4\nodesep, 1\levelsep) {$g$};
\node[node] (u7) at (3\nodesep, 2\levelsep) {$h$};

\draw[euleredge] (u0) edge (u2);
\draw[euleredge] (u2) edge (u3);
\draw[euleredge] (u3) edge (u1);
\draw[euleredge] (u1) edge (u0);

\draw[euleredge] (u5) edge (u4);
\draw[euleredge] (u4) edge (u6);
\draw[euleredge] (u6) edge (u5);

\draw[dagedge] (u6) edge (u7);
\draw[dagedge] (u4) edge (u7);

\draw[dagedge] (u0) edge (u4);
\draw[dagedge] (u4) edge (u2);
\end{tikzpicture}}\hfill
\subcaptionbox{after increasing ranks\label{fig:toy:d}}{
\begin{tikzpicture}

\draw[leveledge] (-0.5\nodesep, 0) -- (4.5\nodesep, 0);
\draw[leveledge] (-0.5\nodesep, \levelsep) -- (4.5\nodesep, \levelsep);
\draw[leveledge] (-0.5\nodesep, 2\levelsep) -- (4.5\nodesep, 2\levelsep);
\draw[leveledge] (-0.5\nodesep, 3\levelsep) -- (4.5\nodesep, 3\levelsep);

\node[node] (u0) at (\nodesep, \levelsep) {$a$};
\node[node] (u1) at (0, \levelsep) {$b$};
\node[node] (u2) at (\nodesep, 3\levelsep) {$c$};
\node[node] (u3) at (0, 3\levelsep) {$d$};

\node[node] (u4) at (2\nodesep, 2\levelsep) {$e$};
\node[node] (u5) at (2.5\nodesep, 1\levelsep) {$f$};
\node[node] (u6) at (4\nodesep, 1\levelsep) {$g$};
\node[node] (u7) at (3\nodesep, 3\levelsep) {$h$};

\draw[euleredge] (u0) edge (u2);
\draw[euleredge] (u2) edge (u3);
\draw[euleredge] (u3) edge (u1);
\draw[euleredge] (u1) edge (u0);

\draw[euleredge] (u5) edge (u4);
\draw[euleredge] (u4) edge (u6);
\draw[euleredge] (u6) edge (u5);
\draw[dagedge] (u6) edge (u7);
\draw[dagedge] (u4) edge (u7);

\draw[dagedge] (u0) edge (u4);
\draw[dagedge] (u4) edge (u2);
\end{tikzpicture}}\hfill
\subcaptionbox{final graph\label{fig:toy:e}}{
\begin{tikzpicture}

\draw[leveledge] (-0.5\nodesep, 0) -- (4.5\nodesep, 0);
\draw[leveledge] (-0.5\nodesep, \levelsep) -- (4.5\nodesep, \levelsep);
\draw[leveledge] (-0.5\nodesep, 2\levelsep) -- (4.5\nodesep, 2\levelsep);
\draw[leveledge] (-0.5\nodesep, 3\levelsep) -- (4.5\nodesep, 3\levelsep);

\node[node] (u0) at (\nodesep, \levelsep) {$a$};
\node[node] (u1) at (0, \levelsep) {$b$};
\node[node] (u2) at (\nodesep, 3\levelsep) {$c$};
\node[node] (u3) at (0, 3\levelsep) {$d$};

\node[node] (u4) at (2\nodesep, 2\levelsep) {$e$};
\node[node] (u5) at (2.5\nodesep, 1\levelsep) {$f$};
\node[node] (u6) at (4\nodesep, 1\levelsep) {$g$};
\node[node] (u7) at (3\nodesep, 3\levelsep) {$h$};

\draw[dagedge] (u0) edge (u2);
\draw[euleredge] (u2) edge (u3);
\draw[euleredge] (u3) edge (u1);
\draw[euleredge] (u1) edge (u0);

\draw[euleredge] (u5) edge (u4);
\draw[euleredge] (u4) edge (u6);
\draw[euleredge] (u6) edge (u5);
\draw[dagedge] (u6) edge (u7);
\draw[dagedge] (u4) edge (u7);

\draw[euleredge] (u0) edge (u4);
\draw[euleredge] (u4) edge (u2);
\end{tikzpicture}}

\caption{Two examples of applying \relief for $(a, c)$. Dotted edges
represent the eulerian subgraph. Ranks are represented by dashed grey horizontal
lines.}
\label{fig:toy}
\end{figure}

\begin{example}
Consider the graph given in Figure~\ref{fig:toy:a}. The eulerian subgraph is
marked with orange dotted edges and the current rank function is represented by the
dashed grey lines. Edge $(a, c)$ has a slack of $1$. Consider applying \relief on edge $(a, c)$.
The algorithm first increases $r(a)$. Edge $(a, e)$ is no longer a forward edge, hence we need to increase $e$.
This in turns transforms edge $(e, h)$ into backward and increases the slack of $(f, e)$.
Ranks for both vertices are also increased. No other modifications are needed and the final graph is
given in Figure~\ref{fig:toy:b}.

Now consider the graph given in Figure~\ref{fig:toy:c} and apply \relief on edge $(a, c)$.
As in previous case, $e$, $f$, and $h$ are increased, but in addition $c$. Note that
we did not manage to reduce the slack between $a$ and $c$. However, if travel back along
the \parent{} links, $\parent{c} = e$ and $\parent{e} = a$ we obtain a path from $c$ to $a$.
By replacing $(a, c)$ with these edges in the eulerian subgraph we obtain a new subgraph
that has more edges.
\end{example}

The previous example showed the two possible outcomes for \relief, in both
cases we reduce the slack.  The following proposition states that this holds in
general, that is, the new $H$ and $r$ are valid and that the difference
between the costs is smaller.

\begin{proposition}
\label{prop:correct}
Assume that we are given a graph $G$. Let $H = (V, F)$ be a maximal eulerian subgraph of
$G$ and let $r$ be a rank function conforming $H$. Assume that there is an edge
$(p, s) \in E(H)$ such that $\slack{(p, s), r} > 0$. Let $H', r' = \relief(H, r, G, p, s)$.
Then $H'$ is a maximal eulerian subgraph of $G$, $r'$ is conforming $H'$,
$\max_{e \in E(H')} \slack{e, r'} \leq \max_{e \in E(H)} \slack{e, r}$, and 
\[
	\abs{\set{e \in E(H') \mid \slack{e, r'} > 0}} < \abs{\set{e \in E(H) \mid \slack{e, r} > 0}}\quad.
\]
\end{proposition}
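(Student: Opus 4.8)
The plan is to first understand the order in which \relief processes vertices — it behaves like a Dijkstra-style relaxation — and then read off the four conclusions from a short list of invariants that hold once the main loop terminates, splitting on whether the final edge swap is performed. The key first step is a queue-discipline lemma. I would prove by induction on pops that whenever a vertex $u$ is popped, (i) its rank has not been touched before, so $r'(u)=r(u)+t(u)$, and (ii) every value $t$ computed for a neighbour during $u$'s iteration satisfies $t\le t(u)$: for a non-$H$ out-edge $(u,v)$ this uses $r'(v)\ge r(v)\ge r(u)+1$ (the edge is forward because $r$ conforms $H$), and for an incoming $H$-edge $(w,u)$ it uses $\max(a+t(u),0)-\max(a,0)\le t(u)$. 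Hence the largest priority in $S$ never increases, so no vertex is reinserted after being popped, every vertex is popped exactly once, $r'(v)$ is frozen once $v$ is popped, and $0\le t(v)\le t(p)=\slack{(p,s),r}$ for all $v$. I would also record that \parent{} pointers point to strictly-earlier-popped vertices, so following them from any touched vertex yields a simple path ending at $p$, and that $\slack{(p,s),r'}=t(s)$, so the swap is performed exactly when $s$ is popped.

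Next I would establish two invariants at loop termination: (a) every edge of $E\setminus F$ is forward w.r.t.\ $r'$; (b) $\slack{e,r'}\le\slack{e,r}$ for every $e\in F$. For (a), a backward $(u,v)\in E\setminus F$ would have caused $u$ (which must have been popped) to schedule $v$ with $t(v)\ge r'(u)+1-r(v)$ — here one uses that $v$ is not yet popped at that moment, since a $v$ popped earlier would have $t(v)\ge t(u)$ and hence $r'(v)>r'(u)$ — forcing $r'(v)\ge r'(u)+1$, a contradiction. For (b), if $w$ was popped before $u$ then $t(w)\ge t(u)$ and the slack cannot grow, while if $w$ is popped after $u$ then $r'(w)=r(w)$ during $u$'s iteration, so the slack-increase test schedules $w$ with exactly what keeps $\slack{(w,u),r'}\le\slack{(w,u),r}$. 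The same analysis yields two sharper identities for an edge on a \parent{} path with $\parent{b}=a$: if the edge is $(a,b)\in E\setminus F$ then $r'(b)=r'(a)+1$, so its slack is $0$; if it is $(b,a)\in F$ then it is forward with $r'(b)=r'(a)-1-\slack{(b,a),r}$, so it has positive slack w.r.t.\ $r'$.

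Then I split into the two cases. If $\slack{(p,s),r'}=0$, no swap happens, $H'=H$ is still maximal eulerian (by (a), $E\setminus F$ is still a DAG), $r'$ conforms it by (a), and by (b) every slack in $F$ only drops; since $(p,s)$ loses its positive slack, the positive-slack count strictly decreases, giving claims 3 and 4. If $\slack{(p,s),r'}>0$, write the \parent{} path as $p=v_0,\dots,v_k=s$ and $O=\set{e_1,\dots,e_k}$; one first checks $(p,s)\notin O$ and that the $e_i$ are distinct. Then $F'=(F\setminus O)\cup(O\setminus F)\setminus\set{(p,s)}$ is balanced: each $e_i$, whether it leaves or enters $F$, shifts the in/out imbalance of $v_{i-1}$ by $+1$ and that of $v_i$ by $-1$, which telescopes and cancels against the deletion of $(p,s)$, so $H'$ is eulerian. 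It is maximal with $r'$ conforming, because every edge of $E\setminus F'=(E\setminus(F\cup O))\sqcup(F\cap O)\sqcup\set{(p,s)}$ is forward w.r.t.\ $r'$ — the three parts by (a), by the $H$-path-edge identity, and because $\slack{(p,s),r'}>0$. Claim 3 follows from (b) on $F\cap F'$ and from the "slack $0$" identity on $O\setminus F$; claim 4 follows because the new edges contribute nothing to the positive-slack count while the surviving old edges form a strict subset of $\set{e\in F:\slack{e,r}>0}$, which also contains $(p,s)$. As a by-product the rank-difference identities give $\abs{O\setminus F}\ge\abs{O\cap F}+2$, hence $\abs{F'}>\abs{F}$, the fact claimed informally in the introduction though not needed here.

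The delicate part is the queue-discipline lemma and the bookkeeping it licenses: one must be sure that a vertex is popped only once, that its rank is therefore frozen at that moment, that the scheduling tests are consequently evaluated against the \emph{original} ranks of as-yet-unpopped neighbours, and that "larger increases first" is precisely what makes the "$w$ popped before $u$" case of invariant (b) go through without any correction. Once these are secured, the eulerian-ness of $F'$ is pure degree counting and the conformance and slack statements are routine bookkeeping.
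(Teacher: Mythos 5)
Your proposal is correct and takes essentially the same route as the paper: your queue-discipline lemma is exactly the paper's Lemma~1, and your two termination invariants together with the case split on whether the swap is performed mirror the paper's Case~1/Case~2 argument, with you merely filling in the degree-counting for eulerianness and the parent-path identities that the paper leaves as ``easy to see.'' The only slip is the unused side remark that an $F$-edge $(b,a)$ on the parent path has positive slack w.r.t.\ $r'$ --- its new slack in fact equals $\slack{(b,a),r}$, which may be zero --- but none of the four claimed conclusions relies on that remark.
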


In order to prove this result we need the following lemma.

\begin{lemma}
\label{lem:visit}
Each vertex visited at most once during \relief.
Order the visited vertices based on their visiting order, say $u_k$.
Let $t_k = t(u_k)$, where $t(u_k)$ is the priority of $u_k$ at the time when $u_k$
is visited. Then $t_{k + 1} \leq t_k$.
\end{lemma}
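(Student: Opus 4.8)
\smallskip\noindent\textbf{Proof plan.}
I would run a Dijkstra-style argument, exploiting that $S$ is a max-priority queue keyed by the pending rank increase $t(\cdot)$. The two assertions --- each vertex is popped at most once, and the priorities $t_1, t_2, \dots$ of the popped vertices $u_1, u_2, \dots$ form a non-increasing sequence --- are mutually dependent, so the plan is to establish them together by induction on the pop order. Two invariants that I would carry along for free: $r'$ is only ever increased, so $r'(w) \ge r(w)$ for every $w$, and in fact $r'(w) = r(w)$ for every $w$ that has not yet been popped (the only line that changes $r'$ fires exactly when a vertex is popped); and, since $r$ conforms the maximal subgraph $H$, every edge of $E \setminus F$ is forward under $r$, hence satisfies $r(v) \ge r(u) + 1$.

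The engine of the proof is a \emph{local bound}: if a vertex $u$ is popped with priority $t(u) = t$ and its rank is still pristine at that moment, so that the update sets $r'(u) \define r(u) + t$, then every priority assigned to a neighbour while $u$ is processed is at most $t$. I would check the two inner loops. For $(u,v) \in E \setminus F$ the candidate priority is $r'(u) + 1 - r'(v) = r(u) + t + 1 - r'(v)$, and $r'(v) \ge r(v) \ge r(u) + 1$ makes this $\le t$. For $(w,u) \in F$ the test fires only when $\slack{(w,u),r'} > \slack{(w,u),r} \ge 0$, so $\slack{(w,u),r'} = r'(u) - r'(w) - 1$; combining with $\slack{(w,u),r} \ge r(u) - r(w) - 1$ and $r'(w) \ge r(w)$, the candidate priority $\slack{(w,u),r'} - \slack{(w,u),r}$ is at most $t - (r'(w) - r(w)) \le t$.

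Granting the local bound, the induction carries the statement: after $u_k$ has been popped and processed, (a) $u_1, \dots, u_k$ are distinct, (b) $t_1 \ge \dots \ge t_k$, and (c) every vertex currently in $S$ has priority at most $t_k$. In the step, $u_{k+1}$ is extracted as a maximum of $S$ (nothing touches $S$ between processing $u_k$ and this pop), so (c) at stage $k$ yields $t_{k+1} \le t_k$, which is (b). For (a), suppose $u_{k+1} = u_i$ with $i \le k$; then $u_i$ was re-inserted into $S$ at some stage $j$ with $i < j \le k$ (it cannot re-insert itself, as the graph is loopless), and at the first such stage a priority strictly exceeding its current value $t_i$ is assigned to it --- but the local bound at stage $j$ (applicable because (a) at stage $k$ forces $u_j$ to have a pristine rank there) bounds that value by $t_j \le t_i$, a contradiction. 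Hence $u_{k+1}$ is fresh; its rank is pristine, so the local bound applies at stage $k+1$, and since $u_{k+1}$ was the maximum of $S$ when popped, every other vertex of $S$ had priority $\le t_{k+1}$, while processing $u_{k+1}$ only creates priorities $\le t_{k+1}$; this gives (c) at stage $k+1$. The base case is the local bound applied to $u_1 = p$, whose rank is untouched and which is the sole element of $S$ when popped.

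The only real friction I anticipate is the circularity just exposed: the local bound needs the popped vertex to have a pristine rank, which is exactly what ``popped at most once'' supplies, while that fact is itself derived from priority-monotonicity, a consequence of the local bound --- so the two conclusions genuinely have to be threaded through a single induction rather than proved one after the other. Everything else (the two case checks, the loopless remark, and the fact that all priorities are positive integers) is routine.
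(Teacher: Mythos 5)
Your proposal is correct and follows essentially the same route as the paper's proof: a single induction over the pop order showing that priorities assigned while processing a popped vertex $u$ never exceed $t(u)$ (using that edges of $E \setminus F$ are forward under $r$ and that $r'$ only grows), which simultaneously yields monotonicity of the popped priorities and that no vertex is ever re-inserted. Your write-up is in fact somewhat more explicit than the paper's, which leaves the $F$-edge case to ``a similar argument,'' while you carry out that slack computation in full.
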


\begin{proof}
We will prove by induction over the iteration of \relief that once
a vertex $u$ has been removed from $S$ it will never be added again to $S$
and the priorities of newly added vertices into $S$ during processing $u$
is at most $t(u)$. 

Assume that this holds for $k - 1$ first iterations, and let $u = u_k$ be a vertex
that is visited during the $k$th iteration. Since $S$ selects elements with 
the highest priorities, the induction assumption implies that $t_{i + 1} \leq t_{i}$
for $i = 1, \ldots, k - 2$.

Let $(u, v) \in E \setminus F$. Let $t = r'(u) + 1 - r'(v)$. Since $u$ is
visited for the first time, we must have $r'(u) = r(u) + t(u)$ which implies
that $t = t(u) + r(u) + 1 - r'(v) \leq t(u)$, where the inequality holds since
$(u, v)$ is a forward edge w.r.t. $r$.  If $v$ is added in $S$, then its
priority is at most $t(u)$.  This proves the second part of the induction step.
On the other hand, if $v$ has been already visited, say $v = u_j$, then $t_j = t(v) \geq t_k$ and
$v$ will not be added into $S$.

A similar argument can be made for the edges in $F$.

This proves the induction step and the lemma as the first step is trivial.\qed
\end{proof}

\begin{proof}[of Proposition~\ref{prop:correct}]
Let us consider two separate cases. In Case 1, $s$ remains
unvisited while in Case 2 we visit $s$.

\emph{Case 1:}
Assume that we do not visit $s$.  In such case, $H' = H$, hence $H'$ is maximal.

We need to first show that edges in $E \setminus F$ remain forward.  Whenever
we increase the rank of $u$ we check that none of the edges in $E \setminus F$
are backward.  Assume there is one, say $(u, v)$. If $v$ is already visited,
then Lemma~\ref{lem:visit} states that $t(v) \geq t(u)$.  Since each vertex is
visited only once, this implies that $r'(v) = r(v) + t(v)$ and $r'(u) = r(u) +
t(u)$. This is a contradiction since $(u, v)$ is a forward edge w.r.t. $r$.
Hence, either $v$ is not visited or is in $S$. Either way, we will increase
$r'(v)$ so that $v$ will become a forward edge at some point.

Using similar argument, we see that the slack of edges in $F$ is not increased.
Since the edge $(p, s)$ is no longer a slack edge and we do not increase slack
of any other edges, we have proved the proposition for Case 1.

\emph{Case 2:}
Assume that we have visited $s$. Write $F' = E(H')$.

Let us first argue that we can reach $p$ by using the $\parent{}$ links.
Lemma~\ref{lem:visit} implies that each vertex is visited only once which
guarantees that $\parent{}$ links form a tree whose root is $p$.

Using the same argument as in Case 1, we see that forward edges in $E \setminus
F$ remain forward edges and the slackness of edges in $F$ is not increased.
Moreover, one can easily show that $(p, s)$ and the edges in $O \cap F$
are also forward edges.  This means that
$E \setminus F'$ contains only forward edges. This means that $r'$ conforms
$H'$ and $E \setminus F'$ form a DAG which is only possible when $H'$ is maximal.
It is easy to see that $H'$ is also eulerian.

Let $O_1 = O \cap (E \setminus F)$. 
For any edge
$(u, v) \in O_1$, we must have $r'(v) = r'(u) + 1$,
otherwise $\parent{v} \neq u$. This shows that the slack of the new edges is $0$.
Since $(p, s)$ is removed from $H'$ and we do not increase slack
of any other edges, we have proved the proposition for Case 2.
\qed
\end{proof}

Our next step that this single iteration is linear in the number of edges.

\begin{proposition}
\label{prop:relieftime}
The running time of \relief is $O(m + \slack{(p, s), r})$.
\end{proposition}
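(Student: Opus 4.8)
The plan is to bound separately the work done inside the priority queue and the work done scanning edges. By Lemma~\ref{lem:visit}, each vertex is popped from $S$ at most once, so there are at most $n$ pop operations and, correspondingly, at most $n$ successful decrease-key/insert operations. The real content is bounding the number of edge scans: every time a vertex $u$ is popped we iterate over all edges incident to $u$ (outgoing edges in $E \setminus F$ and incoming edges in $F$), which is $O(\degree{u})$ work, so the total scanning cost is $O(\sum_u \degree{u}) = O(m)$. Thus if the priority queue supported $O(1)$ operations we would be done with an $O(m)$ bound; the extra term $\slack{(p,s), r}$ in the statement must therefore come from the cost of the priority-queue operations themselves.

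The key observation for handling the queue is Lemma~\ref{lem:visit}: the priorities $t(u_k)$ at which vertices are popped form a nonincreasing sequence $t_1 \geq t_2 \geq \cdots$, and $t_1 = t(p) = r(s) - r(p) - 1 = \slack{(p,s), r}$. So all priorities ever used lie in the integer range $\{0, 1, \ldots, \slack{(p,s), r}\}$, and moreover the sequence of extracted minima is monotone. This is exactly the setting in which a monotone priority queue implemented as an array of buckets indexed by priority value runs in $O(1)$ amortized time per operation plus $O(\slack{(p,s), r})$ total overhead for walking the bucket pointer downward from $t(p)$ to $0$: insertion places a vertex in its bucket in $O(1)$, a decrease-key moves it to a lower bucket in $O(1)$, and extraction advances a pointer through (possibly empty) buckets, each advance charged once over the whole run. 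Since there are $O(n) = O(m)$ queue operations and the bucket pointer traverses at most $\slack{(p,s), r}+1$ buckets in total, the queue contributes $O(m + \slack{(p,s), r})$.

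Putting the two pieces together: initialization of $F$, $r'$, and the array $t(\cdot)$ is $O(m + n) = O(m)$; the main while-loop does $O(m)$ edge scanning and $O(m + \slack{(p,s), r})$ queue work; and in Case~2 the final cleanup — walking the $\parent{}$ path $O$ from $s$ to $p$ and forming the symmetric difference $F \triangle O$ — touches each edge of that path once, which is $O(m)$. Summing gives $O(m + \slack{(p,s), r})$, as claimed. The one point that needs care, and which I would expect to be the main obstacle in writing this cleanly, is justifying that the bucket-queue overhead is genuinely $O(\slack{(p,s), r})$ and not, say, $O(n \cdot \slack{(p,s), r})$: this relies precisely on the monotonicity from Lemma~\ref{lem:visit}, so that the extraction pointer only ever moves downward and never needs to be reset, letting its total travel be charged once against the range $[0, t(p)]$ rather than once per extraction.
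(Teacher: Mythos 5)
Your proposal is correct and follows essentially the same route as the paper: bound the edge-scanning work by $O(m)$ via the single-visit property of Lemma~\ref{lem:visit}, and implement the priority queue as an array of buckets indexed by the integer priorities, so that the non-increasing pop sequence lets the extraction pointer move monotonically and contribute only $O(\slack{(p,s), r})$ total. Your write-up merely makes explicit the monotone-bucket-queue charging argument that the paper states more tersely.
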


\begin{proof}
Since each vertex is visited only once (Lemma~\ref{lem:visit}) we will consider
each edge only once. Hence, the inner for-loops are executed $m$ times at most.
Since the priorities of vertices are integers, we can implement the priority
queue by storing each vertex into an array of $ \slack{(p, s), r} - 1$ linked
lists. Inserting or updating a vertex will take a constant time.  Since the
new priorities will always be smaller or equal, obtaining the maximum element
takes $O(\slack{(p, s), r})$ of \emph{total} time due to the fact that we need
to possibly check some empty linked lists.
This proves the proposition.
\qed
\end{proof}

Alternatively, we can implement the priority queue as a heap which gives the running
time to be $O(m \log n)$.

In practice, we also apply the following speed-up. We monitor $t(s)$ constantly
and we visit only those vertices that have larger priority. Since,
Lemma~\ref{lem:visit} states that the priorities are non-increasing, we simply
stop the main loop once we encounter a vertex with the same priority as $t(s)$.
In addition, once we are done we backtrack rank of each visited vertex by $t(s)$.  If
$s$ is not visited, then $t(s)$ remains $0$ and this speed-up has no effect.
However, if $s$ is inserted in the stack $S$, we will prune vertices that have
the same or lower priority than $S$. We ignore any vertex that should be lowered
by at most $t(s)$. This may transform some forward edges into backward edges but
we counter this by lowering the rank of the already visited vertices by $t(s)$.
This implies that the forward edges remain forward and the arguments done in proof of Proposition~\ref{prop:correct}
are valid.

We are now ready to state the main loop, given in Algorithm~\ref{alg:minagony},
which applies \relief to the edge with the largest slack.

\begin{algorithm}[ht!]
\caption{\minagony, given a graph $G$, a maximal eulerian subgraph $H$ and a
rank function $r$ conforming $H$, finds a rank function  optimizing agony}
\label{alg:minagony}
\Input{underlying graph $G$, maximal eulerian subgraph $H$, rank function $r$}
\Output{optimal maximal eulerian subgraph and rank function}

\While {$\agony{G, r} > \abs{E(H)}$} {
	$(p, s) \define $ an edge in $E(H)$ with largest slack\;
	$H, r \define \relief(H, r, G, p, s)$\;

}

\Return $H,\ r$\;
\end{algorithm}

Our next step is to show that we need to call \relief at most $O(m)$. 
\begin{proposition}
\label{prop:minagonytime}
Assume a graph $G$, a maximal eulerian subgraph $H$ and a rank function conforming $H$
such that $\slack{e, r} \leq m$ for any edge $e \in H$. Then $\minagony(G, H, r)$ takes $O(m^2)$ time.
\end{proposition}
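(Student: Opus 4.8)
The goal is to bound the total running time of \minagony by $O(m^2)$ given that every edge slack is at most $m$ initially. The strategy is to combine three ingredients already available: (i) Proposition~\ref{prop:relieftime}, which says one call to \relief costs $O(m + \slack{(p,s),r})$; (ii) Proposition~\ref{prop:correct}, which says each call strictly decreases the number of slack edges in $H$ \emph{and} never increases the maximum slack; (iii) the hypothesis that the maximum slack never exceeds $m$. The plan is to show that \relief is called at most $O(m)$ times, and that each call costs $O(m)$, so the product is $O(m^2)$.

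First I would argue the bound on the number of iterations. By Proposition~\ref{prop:correct}, each invocation of \relief strictly reduces $\abs{\set{e \in E(H) \mid \slack{e,r} > 0}}$, the number of slack edges in the current eulerian subgraph. Since $E(H) \subseteq E$, this quantity is an integer in $\set{0, 1, \ldots, m}$, so after at most $m$ calls it reaches $0$. When it reaches $0$, Proposition~\ref{prop:slack} tells us $\agony{G,r} = \abs{E(H)}$, so the while-loop in Algorithm~\ref{alg:minagony} terminates. Hence there are at most $m$ iterations. (One should also note that \relief preserves the invariants required to re-enter the loop: by Proposition~\ref{prop:correct} the returned $H'$ is again a maximal eulerian subgraph and $r'$ conforms it, so the loop precondition is maintained.)

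Next I would bound the cost of a single call. By Proposition~\ref{prop:relieftime}, call number $i$ costs $O(m + \slack{(p_i,s_i),r_i})$ where $(p_i,s_i)$ is the chosen edge and $r_i$ the rank at that point. It therefore suffices to show $\slack{(p_i,s_i),r_i} \le m$ throughout. This is where the max-slack monotonicity and the hypothesis come in: the hypothesis gives $\max_{e\in E(H)}\slack{e,r}\le m$ at the start, and Proposition~\ref{prop:correct} guarantees $\max_{e\in E(H')}\slack{e,r'}\le\max_{e\in E(H)}\slack{e,r}$ after each call, so by induction the maximum slack stays $\le m$ for the entire run. Since the edge passed to \relief is always in the current $E(H)$, its slack is at most this maximum, hence at most $m$. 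Therefore each of the $O(m)$ calls costs $O(m+m)=O(m)$, and the total is $O(m^2)$. The while-condition test $\agony{G,r}>\abs{E(H)}$ and the selection of the largest-slack edge are each $O(m)$ per iteration and do not change the bound.

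The main obstacle, and the reason the hypothesis $\slack{e,r}\le m$ is assumed rather than derived here, is controlling the per-call cost: without the uniform slack bound, Proposition~\ref{prop:relieftime} only gives $O(m+\text{slack})$, and a priori the slack could be as large as $\Theta(n)$ times a rank spread that is itself $\Theta(m)$, which would blow up the bound. The clean resolution is exactly the monotonicity clause of Proposition~\ref{prop:correct}: because \relief never \emph{increases} the largest slack in $H$, a single bound at initialization propagates for free to all subsequent iterations. So the only real work in this proof is assembling the earlier propositions correctly; the delicate part — showing slacks do not grow — has already been discharged in Proposition~\ref{prop:correct}. (Separately, one would want a remark that the initialization indeed produces a state with $\slack{e,r}\le m$, but that belongs to the discussion of the initialization step rather than to this proposition.)
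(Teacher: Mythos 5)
Your proposal is correct and follows essentially the same route as the paper: bound the number of \relief calls by $m$ via the strict decrease in the number of slack edges (Proposition~\ref{prop:correct}), and bound each call by $O(m)$ by combining Proposition~\ref{prop:relieftime} with the non-increase of the maximum slack and the hypothesis $\slack{e,r}\leq m$. Your additional remarks about invariant preservation and the per-iteration overhead of the while-test and edge selection are fine but not needed beyond what the paper already argues.
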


\begin{proof}
Proposition~\ref{prop:correct} states that each call reduces the number of
slack edges by at least $1$. There can be at most $m$ slack edges.  Hence, the
number of \relief calls is at most $m$.  Since $\slack{e, r} \leq m$ at the
beginning and Proposition~\ref{prop:correct} states that slack is never
increased, Proposition~\ref{prop:relieftime} implies that calling \relief 
takes $O(m)$ time. This completes the proof.
\qed
\end{proof}

Assume that we are given $H$, a maximal eulerian subgraph of $G$.  Then $E(G)
\setminus E(H)$ is a DAG, and any topological order will provide a rank
function that is conforming with $H$. In this paper, we use a rank function,
where we first remove all source vertices simultaneously from the DAG and
assign them the same rank. We continue this until DAG is empty.
The largest rank in this case is at most $n$, this also bounds the slack
and consequently the conditions in Proposition~\ref{prop:minagonytime} are satisfied. 

\subsection{Discovering maximal eulerian subgraph}

Our final step is to discover a maximal eulerian subgraph.  This can be done
naively by running a DFS, finding and a removing a cycle and repeating until no
cycles are left. This gives us running time of $O(m^2)$. A more sophisticated
approach can be done with a single DFS, given in Algorithm~\ref{alg:dfs}.

\begin{algorithm}[ht!]
\caption{\cycledfs, discovers a maximal eulerian subgraph.}
\label{alg:dfs}
\Input{$G$, directed graph}
\Output{$F$, edges corresponding to a maximal eulerian subgraph}

\While {$V \neq \emptyset$} {
	$S \define $ any vertex in $V$\;
	\While {$S \neq \emptyset$} {
		$u \define$ first vertex in $S$\;
		\If {there is $(u, v) \in E$} {
			\If {$v \in S$} { 
				$O \define (u, v)$ and the path from $v$ to $u$ along $S$\;
				$F \define F \cup O$\;
				delete $O$ from $G$\;
				pop vertices from $S$ until the last vertex is $v$\;
			}
			\Else {
				push $v$ to $S$\;
			}
		}
		\Else {
			pop $u$ from $S$\;
			remove $u$ from $G$\;
		}
	}
}
\Return $F$\;
\end{algorithm}

\cycledfs starts with DFS and the moment it discovers a back edge, it finds a corresponding cycle.
The algorithm proceeds by deleting the cycle and backtracking to the first vertex of visited cycle. 
The following proposition shows that the algorithm indeed finds a maximal eulerian subgraph. 

\begin{proposition}
\cycledfs discovers maximal eulerian subgraph.
\end{proposition}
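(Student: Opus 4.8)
The plan is to establish two things: (i) upon termination the algorithm has removed exactly a set of edge-disjoint cycles from $G$, so the collected edge set $F$ is eulerian; and (ii) the edges remaining in $G$ at termination form a DAG, which by the characterization stated earlier in the excerpt (``$H$ is maximal if and only if the remaining edges form a DAG'') is equivalent to $H = (V, F)$ being maximal. For (i), I would observe that every time the algorithm adds edges to $F$, it adds the edge $(u,v)$ together with the path from $v$ to $u$ currently recorded on the DFS stack $S$; since $S$ records a simple path (no vertex appears twice on the stack, a standard DFS invariant that I would check survives the cycle-deletion and popping step), these edges form a directed cycle, and they are immediately deleted from $G$, so cycles collected at different moments are edge-disjoint. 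A disjoint union of directed cycles is eulerian, giving (i).

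For (ii), the key claim is that when a vertex $u$ is popped from $S$ and removed from $G$ in the \textbf{else} branch, $u$ has no outgoing edges left in the current graph, and moreover every edge ever incident to $u$ that remains must have already been accounted for. I would argue that at the moment $u$ is removed, all out-edges of $u$ are gone (that is the guard of the branch), so in the final returned structure, if we order vertices by the reverse of their removal order, every surviving edge goes from a later-removed vertex to an earlier-removed one — a topological order — hence the survivors form a DAG. The subtle point is that when a cycle is deleted and the stack is truncated back to $v$, the intermediate vertices that get popped are popped \emph{without} being removed from $G$; I need to confirm that re-pushing and re-exploring such a vertex later is consistent and that the "no out-edges at removal time" invariant still holds for it when it is eventually removed for good. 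This requires noting that a vertex is removed from $G$ only via the \textbf{else} branch, and that branch fires only when the vertex currently has no out-edge, so the invariant is literally enforced at removal time regardless of the stack gymnastics.

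The remaining bookkeeping is to verify termination and that the outer \textbf{while} loop correctly handles disconnected graphs: each iteration of the inner loop either deletes at least one edge (cycle deletion) or removes a vertex, so the process terminates in $O(n+m)$ steps; and when the inner loop empties $S$ it has removed from $G$ every vertex reachable in the explored component, so the outer loop picks up any leftover vertices. Combining: $F$ is eulerian (from (i)) and what is left is a DAG (from (ii)), so $(V,F)$ is a maximal eulerian subgraph.

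I expect the main obstacle to be the precise invariant maintenance around the stack-truncation step — specifically, arguing rigorously that after popping down to $v$ and continuing, the DFS stack is still a simple directed path in the \emph{updated} graph, and that no edge of $G$ is ever "skipped" so that the final leftover graph really is acyclic rather than merely having no cycle through the last-explored vertex. Handling re-visits of vertices that were popped during a truncation (they may be pushed again from a different predecessor) is where the argument needs the most care; everything else is routine DFS reasoning plus the already-proven equivalence between maximality and the leftover-DAG property.
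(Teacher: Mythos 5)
Your proposal is correct and takes essentially the same route as the paper: $F$ is an edge-disjoint union of cycles and hence eulerian, and maximality follows from the fact that a vertex is removed from $G$ only when it has no remaining outgoing edges — the paper phrases this step as a contradiction (the first-deleted vertex of a hypothetical cycle in the leftover edges would still have an outgoing edge at its removal), while you phrase it equivalently as the reverse removal order being a topological order of the surviving edges. The stack invariant you flag does hold (cycle deletions only remove edges above the truncation point, and vertex removals never touch edges whose both endpoints are on the stack), and in fact the paper's own proof simply asserts the edge-disjoint-cycle structure of $F$ without this verification, so your plan is, if anything, more careful on that point.
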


\begin{proof}
$F$ consists of edge-disjoint cycles, and by definition is eulerian.
Assume that $F$ is not maximal, that is, there is a cycle $C$.
Let $u$ be the first vertex in $C$ that is deleted from $G$. 
Let $e$ be the outgoing edge from $u$ in $C$. By definition, $e$ is not added
in $F$. This implies that when we delete $u$ from $G$, $e$ is still present in
$G$ which is a contradiction since we only delete vertices with no outgoing edges.
\qed
\end{proof}

As a final step we show that \cycledfs runs in linear time.

\begin{proposition}
\cycledfs executes in $O(m)$ time.
\end{proposition}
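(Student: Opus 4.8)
The plan is to show that the total work done by \cycledfs{} across the whole run is proportional to the number of edges, by amortizing the cost of each primitive operation against edges that are permanently removed from $G$. The key observation is that every edge of $G$ is deleted exactly once (either as part of a cycle that is added to $F$, or when its tail vertex is removed for having no outgoing edges), and every vertex is deleted exactly once, so if I can charge each unit of work to a distinct edge or vertex deletion I am done.

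\medskip
\noindent First I would set up the accounting for the DFS traversal itself. Each iteration of the inner \textbf{while} loop either (a) follows an edge $(u,v)$ with $v \notin S$, pushing $v$ onto the stack; (b) follows a back edge $(u,v)$ with $v \in S$; or (c) finds no outgoing edge at $u$ and pops $u$. For case (c), I charge the work to the vertex $u$, which is simultaneously removed from $G$ and never reappears; this accounts for $O(n) = O(m)$ total work (adding isolated vertices as needed, or noting $n \le m+1$ on the relevant components). For case (a), I charge the push to the edge $(u,v)$ — but I need to argue this edge is not traversed again uselessly. The subtle point is that when we later pop $u$ (case (c)), all of $u$'s outgoing edges have been deleted, so each outgoing edge of $u$ triggered at most one ``push'' or ``back-edge'' event; but an edge could be scanned, found to lead to a stacked vertex, and we proceed — so I must be careful that ``there is $(u,v) \in E$'' is implemented so that we do not rescan already-examined edges. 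Assuming the standard adjacency-list iterator that advances and never resets, each edge is examined $O(1)$ times before being deleted, so cases (a) and (b) together cost $O(m)$.

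\medskip
\noindent Next I would handle the cycle-extraction steps (case (b) bodies): when a back edge $(u,v)$ with $v \in S$ is found, we trace the path from $v$ to $u$ along $S$, add these edges together with $(u,v)$ to $F$, delete them from $G$, and pop the stack back to $v$. The cost here is proportional to the length of the discovered cycle $O$. Since every edge of $O$ is deleted from $G$ at this moment and is never seen again, I charge the cycle-tracing and the deletions to the edges of $O$; summed over all cycles found, this is $O(m)$ because the cycles are edge-disjoint. The stack pops in this step are also bounded by $|O|$ (we pop one vertex per edge of the cycle traced, except we stop at $v$), and although these vertices are \emph{not} deleted — they may be pushed again later via another edge — each such re-push is itself paid for by the edge that causes it, which was already counted in case (a). So no double counting occurs: a vertex may enter the stack multiple times, but each entry corresponds to a distinct edge traversal of type (a), and there are only $m$ such traversals in total.

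\medskip
\noindent The main obstacle is the bookkeeping around vertices that are popped during cycle extraction and later re-pushed: I must make the charging scheme airtight so that the repeated stack operations on such vertices are provably bounded, which comes down to the fact that \emph{entries} into the stack are in bijection with a subset of edge-traversal events of type (a) plus the initial arbitrary-vertex selections (at most $n$ of them, one per connected component processed), and each edge contributes at most one type-(a) event. Once that bijection is established, every line of the algorithm is charged $O(1)$ against an edge deletion, a vertex deletion, or a component-start, all of which total $O(m)$, giving the claimed $O(m)$ running time. \qed
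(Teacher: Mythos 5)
Your proof is correct and follows essentially the same amortized charging argument as the paper: total work is bounded by edge deletions (each edge deleted once) plus stack pushes, and each push of a vertex is charged to an incoming edge or to an initial selection, which is exactly the paper's bound of in-degree plus one per vertex, i.e., $O(n+m)$ overall. Your extra care about the adjacency-list iterator and about re-pushed vertices just makes explicit implementation details the paper leaves implicit, so no further changes are needed.
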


\begin{proof}
During a single iteration of the inner while-loop we either delete $x$ edges or
push a vertex into a stack. Hence, the total running time is bounded by the
number of edges deleted plus the number of pushes. Since each edge can be
deleted only once, the first term is bounded by $m$. The number of times
we will push a vertex $u$ into $S$ is bounded by the in-degree of $u$ plus 1.
Consequently, the number of pushes we will do in total is $O(n + m)$, which proves the result.
\qed
\end{proof}

\section{Related work}
\label{sec:related}

From algorithmic point of view, the relation between our approach and the
algorithm given by~\citet{gupte:11:agony} is intriguing.  Both methods are
based on primal-dual techniques, that is, they rely on the relationship between
the primal problem, minimizing agony, and the dual problem, maximizing the
eulerian subgraph. Gupte's algorithm is essentially an instance of the
primal-dual  algorithm, where one tries to improve the dual problem, in this
case discovering maximum eulerian subgraph, until no improvement is possible.
This improvement correspond to finding the negative cycle in a certain weighted
graph, that is, a cycle whose sum of weights is negative.
Currently the best algorithm for discovering negative cycle needs $O(nm)$
time~\cite{DBLP:conf/esa/CherkasskyG96} and this can be achieved with a Bellman-Ford algorithm~\cite{cormen:01:algorithm}.
Since we need $m$ iterations at most, the computational complexity of this approach is $O(nm^2)$.

On the other hand, our approach is also an instance  of the primal-dual
algorithm. Especially, both algorithms improve the current eulerian subgraph.
The difference is that while Gupte's algorithm searches the improvement by
transforming the problem into discovering negative cycles, we discover the
improvement in several calls of \relief. During each call of \relief if we have
not able to find a new improvement for the eulerian subgraph, then we are able
to improve the primal problem, that is, minimizing agony. In other words, while
searching for improvement for the eulerian subgraph, we are able to use
intermediate calculations to minimize the agony. This allows us to achieve
a better computational complexity of $O(m^2)$. 

The agony of a single edge is chosen very carefully. For example, if we choose
agony to be $1$ for every backward edge, then the problem is related to
\textsc{Feedback Arc Set}, where the goal is to discover a directed acyclic graph $H$.
from a given directed graph $G$ such that $E(G) \setminus E(H)$ is minimized.
This problem is not only \np-hard, it is also \apx-hard with a coefficient of $c = 1.3606$~\cite{dinur:05:cover}.
There is no known constant-ratio approximation algorithm for \textsc{FAS} and
the best known approximation algorithm has ratio $O(\log n \log \log n)$~\cite{even:98:feedback}.

Next, we highlight some of the existing methods for discovering hierarchies.
\citet{DBLP:conf/cse/MaiyaB09} suggested a statistical model where the
probability of an edge is high between a parent and a child. To find the
hierarchy they employ a greedy heuristic. \citet{clauset:08:hierarchy} studied
discovering hierarchy in undirected graphs, where given a dendrogram, the
probability of an edge between two vertices is based on Erd\H{o}s-R\'{e}nyi
model, with a probability depending on the lowest common ancestor in the
dendrogram. The authors then sample dendrograms using MCMC techniques.
\citet{DBLP:conf/icdm/MacchiaBGC13} used agony to discover summaries of
propagations based on traces. \citet{jameson:99:behaviour} applied a model,
where the likelihood of the the vertex dominating other is based on the
difference of their ranks, to animal dominance data. Similar ideas has been used
for ranking chess players by~\citet{elo1978rating}.
Finally, hierarchy partitions vertices into groups, the top-level vertices having very
different role than the bottom-level vertices. Assigning different roles to
vertices have received some attention. \citet{henderson:12:roix} consider assigning roles
to vertices based on features while \citet{mccallum:07:roles} assigned topic
distributions to individual vertices. An interesting direction for future work
would be to study how hierarchy can be used for role mining in graphs.

\section{Experimental evaluation}\label{sec:exps}

While we were able to improve the computational complexity of computing agony
from $O(nm^2)$ to $O(m^2)$, the bound is still impractical even for graphs of
modest size. Our next goal is to demonstrate empirically that this bound is in
fact pessimistic and that we can compute the agony for large graphs.

In order to do so, we applied our algorithm for several large directed graphs,
downloaded from Stanford Large Network Dataset Collection
(SNAP).\!\footnote{The datasets and their detailed descriptions are available
at http://snap.stanford.edu/data/index.html} We removed any edges of form $(u,
u)$ as they have no effect on the rank.  The characteristics of the datasets
are given in the first 2 columns in Table~\ref{tab:basic}.
In addition, to our algorithm we applied a baseline algorithm of~\citet{gupte:11:agony}. The algorithm
requires a subroutine for detecting a negative cycle. We used Bellman-Ford
algorithm with an additional speed-up, where after each iteration over the
edges we check whether a cycle has been discovered.
We implement both algorithms in C++ and performed experiments using a Linux-desktop  
equipped with a Opteron 2220 SE processor.
The running times and detailed statistics are given in Table~\ref{tab:basic}.

\begin{table}[ht!]
\caption{Basic characteristics of the datasets and statistics from experiments.
The 3rd column indicates the number of iterations, the 4th column indicates
the slack of the starting point, the 5th depicts the final score.
The running times for \minagony and the baseline are given in 6th and 7th columns, respectively.}
\label{tab:basic}
\begin{tabular*}{\textwidth}{@{\extracolsep{\fill}}lrrrrrrr}
\toprule
Dataset & $\abs{V}$ & $\abs{E}$ & iterations & gap & agony & time & baseline \\
\midrule
Amazon     & 403\,394 & 3\,387\,388 &  89\,046 &    911\,095 & 1\,973\,965 & 4h27m & --\\
Gnutella   &  62\,586 &    147\,892 &   1\,907 &    150\,851 &     18\,964 & 45s   & 20m\\
EmailEU    & 265\,214 &    418\,956 &  27\,679 &    500\,177 &    120\,874 & 2m    & 3h45m\\
Epinions   &  75\,879 &    508\,837 &  18\,652 &    922\,817 &    264\,995 & 20m   & 1h40m\\
Slashdot   &  82\,168 &    870\,161 &  37\,858 & 1\,891\,586 &    748\,582 & 1h5m  & 7h3m\\
WebGoogle  & 875\,713 & 5\,105\,039 & 164\,708 & 4\,110\,696 & 1\,841\,215 & 2h32m & --\\
WikiVote   &   7\,115 &    103\,689 &      865 &     76\,149 &     17\,676 & 7s    & 1m\\
\bottomrule
\end{tabular*}
\end{table}

Our first observation is that the theoretical bound is indeed pessimistic.  We
are able to compute the agony for large networks in reasonable time. We spend 2.5
hours for computing agony for \emph{WebGoogle}, a graph with over 5 million
edges, and 4 hours for \emph{Amazon}, a graph with over 3 million edges.  For
smaller graphs, the running time is significantly faster, either seconds or
minutes.

The reason for this scalability is two-fold. First of all, the number of
iterations, given in 4th column, is significantly lower than the number of
edges. Secondly, \relief typically affects less than $m$ vertices.

%Let us now compare the running time to the baseline approach.
Our method
performs better than the baseline for all datasets. We interrupted the baseline
calculation after 24 hours for \emph{Amazon} and \emph{WebGoogle}.
%The bottleneck of the baseline approach is discovering negative cycles.

Finally, let us consider behaviour of agony and the size of the eulerian
graph as a function of iterations. In order to do so we plot the
evolution of scores, normalized by the final agony, in Figure~\ref{fig:anytime}.

\begin{figure}[ht!]
\newlength{\imgwidth}
\setlength{\imgwidth}{2.25cm}

￼\pgfplotsset{
    xtick scale label code/.code={}
}
\begin{tikzpicture}
\begin{axis}[xlabel={iterations $\times 10^{4}$},ylabel= {score / final},
    width = \imgwidth,
    cycle list name=yaf,
	scale only axis,
	title = {\emph{EmailEU}},
	no markers
    ]
\addplot table[x index = 0, y index = 1, header = false] {results/email-EuAll.dat};
\addplot table[x index = 0, y index = 2, header = false] {results/email-EuAll.dat};
\pgfplotsextra{\yafdrawaxis{0}{27600}{5.102}{0.964}}
\end{axis}
\end{tikzpicture}
\begin{tikzpicture}
\begin{axis}[xlabel={iterations $\times 10^{3}$},%ylabel= {score / final score},
    width = \imgwidth,
    cycle list name=yaf,
	scaled x ticks = base 10:-3,
	scale only axis,
	title = {\emph{Gnutella}},
	ytick = {1, 3, 5, 7},
	no markers
    ]
\addplot table[x index = 0, y index = 1, header = false] {results/p2p-Gnutella31.dat};
\addplot table[x index = 0, y index = 2, header = false] {results/p2p-Gnutella31.dat};
\pgfplotsextra{\yafdrawaxis{0}{1899}{ 8.587}{0.650}}
%\pgfplotsextra{\yafdrawaxis{2}{10}{0.86}{0.93}}
\end{axis}
\end{tikzpicture}
\begin{tikzpicture}
\begin{axis}[xlabel={iterations $\times 10^{4}$},%ylabel= {score / final score},
    width = \imgwidth,
    cycle list name=yaf,
	scale only axis,
	title = {\emph{Epinions}},
	no markers
    ]
\addplot table[x index = 0, y index = 1, header = false] {results/soc-Epinions1.dat};
\addplot table[x index = 0, y index = 2, header = false] {results/soc-Epinions1.dat};
\pgfplotsextra{\yafdrawaxis{0}{18600}{4.381}{0.899}}
%\pgfplotsextra{\yafdrawaxis{2}{10}{0.86}{0.93}}
\end{axis}
\end{tikzpicture}
\begin{tikzpicture}
\begin{axis}[xlabel={iterations $\times 10^{4}$},%ylabel= {score / final score},
    width = \imgwidth,
    cycle list name=yaf,
	scale only axis,
	title = {\emph{Slashdot}},
	no markers
    ]
\addplot table[x index = 0, y index = 1, header = false] {results/soc-Slashdot0902.dat};
\addplot table[x index = 0, y index = 2, header = false] {results/soc-Slashdot0902.dat};
\pgfplotsextra{\yafdrawaxis{0}{37800}{3.488}{0.961}}
%\pgfplotsextra{\yafdrawaxis{2}{10}{0.86}{0.93}}
\end{axis}
\end{tikzpicture}

\begin{tikzpicture}
\begin{axis}[xlabel={iterations $\times 10^{4}$},ylabel= {score / final},
    width = \imgwidth,
    cycle list name=yaf,
	scale only axis,
	title = {\emph{Amazon}},
	no markers
    ]
\addplot table[x index = 0, y index = 1, header = false] {results/amazon0601.dat};
\addplot table[x index = 0, y index = 2, header = false] {results/amazon0601.dat};
%\pgfplotsextra{\yafdrawaxis{2}{10}{0.86}{0.93}}
\pgfplotsextra{\yafdrawaxis{0}{89000}{1.416}{0.955}}
\end{axis}
\end{tikzpicture}
\begin{tikzpicture}
\begin{axis}[xlabel={iterations $\times 10^{5}$},%ylabel= {score / final score},
    width = \imgwidth,
    cycle list name=yaf,
	scale only axis,
	title = {\emph{WebGoogle}},
	no markers
    ]
\addplot table[x index = 0, y index = 1, header = false] {results/web-Google.dat};
\addplot table[x index = 0, y index = 2, header = false] {results/web-Google.dat};
\pgfplotsextra{\yafdrawaxis{0}{157193}{3.139}{0.917}}
%\pgfplotsextra{\yafdrawaxis{2}{10}{0.86}{0.93}}
\end{axis}
\end{tikzpicture}
\begin{tikzpicture}
\begin{axis}[xlabel={iterations $\times 10^2$},%ylabel= {score / final score},
    width = \imgwidth,
    cycle list name=yaf,
	scale only axis,
	title = {\emph{WikiVote}},
	scaled x ticks = base 10:-2,
	no markers,
	legend pos = outer north east,
	legend entries = {$\agony{G, r}$, $\abs{E(H)}$}
    ]
\addplot table[x index = 0, y index = 1, header = false] {results/wiki-Vote.dat};
\addplot table[x index = 0, y index = 2, header = false] {results/wiki-Vote.dat};
\pgfplotsextra{\yafdrawaxis{0}{864}{5.142}{0.838}}
%\pgfplotsextra{\yafdrawaxis{2}{10}{0.86}{0.93}}
\end{axis}
\end{tikzpicture}
\caption{Scores as a function of iteration. Each plot represents a single dataset.
The upper line depicts the current agony score, normalized by the final score, as a function of a current iteration.
The lower line depicts the current number of edges
in the eulerian subgraph, normalized by the final score, as a function of a current iteration. Note that the x-axis is scaled.}
\label{fig:anytime}
\end{figure}
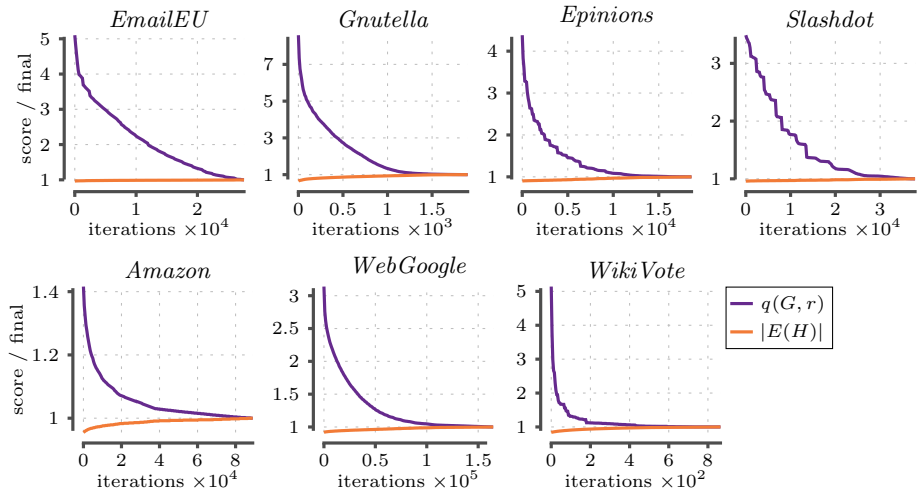

We see that the initial agony is significantly larger than the final
agony. For most datasets the agony drops quickly. For the largest datasets
the algorithm achieves approximation ratio of $2$ relatively quickly:
for \emph{WebGoogle} the algorithm achieves approximation ratio of $2$ during the first 8\%
of iterations. This suggests that we can use the algorithm as any-time algorithm,
stopping iterations early once we achieved acceptable approximation ratio. Note that
since the optimal solution is at least as large as the current eulerian subgraph, we can
at any time bound the approximation ratio of the current agony.

\section{Concluding remarks}\label{sec:concl}

In this paper we introduced an algorithm for discovering hierarchy among
vertices in a given directed graph. The hierarchy should minimize agony, the
edges that violate the hierarchical structure. We show that our algorithm
achieves computational complexity of $O(m^2)$ which is significantly better
than the current bound of $O(nm^2)$. We also demonstrate
that $O(m^2)$ is a pessimistic estimate of the running time and in practice the algorithm
scales up for large networks.

There are several interesting directions for future work. An obvious and
practical extension is to make edges weighted. Weighting edges will change the
definition of the dual problem as we no longer are looking for maximum eulerian
subgraph. On the other hand, integer weights can be viewed as multiple edges
which should imply that the same framework can be applied. Another 
fruitful direction is to consider discovering hierarchies with constraints, such
as the number of hierarchies or demanding that certain vertices have fixed ranks.

\spara{Acknowledgements.} This work was supported by Academy of Finland grant 118653 ({\sc algodan})

\renewcommand\bibname{References}
\bibliography{bibliography}

\end{document}